\newtheorem{lemma}{Lemma}
\newtheorem{theorem}{Theorem}
\newcommand{\ignore}[1]{}
\begin{document}
\title{A Multi-Class Dispatching and Charging Scheme\\for Autonomous Electric Mobility On-Demand}
\author{\IEEEauthorblockN{Syrine Belakaria$^*$, Mustafa Ammous$^*$, Sameh Sorour$^*$ and Ahmed Abdel-Rahim$^\dag$$^\ddag$}
\IEEEauthorblockA{$^*$Department of Electrical and Computer Engineering,
University of Idaho,
Moscow, ID, USA\\ $^\dag$Department of Civil and Environmental Engineering,
University of Idaho,
Moscow, ID, USA\\$^\ddag$National Institute for Advanced Transportation Technologies, University of Idaho, Moscow, ID, USA\\
Email: \{ammo1375, bela7898\}@vandals.uidaho.edu, \{samehsorour, ahmed\}@uidaho.edu}
}

\maketitle

\begin{abstract}
Despite the significant advances in vehicle automation and electrification, the next-decade aspirations for massive deployments of autonomous electric mobility on demand (AEMoD) services are still threatened by two major bottlenecks, namely the computational and charging delays. This paper proposes a solution for these two challenges by suggesting the use of fog computing for AEMoD systems, and developing an optimized multi-class charging and dispatching scheme for its vehicles. A queuing model representing the proposed multi-class charging and dispatching scheme is first introduced. The stability conditions of this model and the number of classes that fit the charging capabilities of any given city zone are then derived. Decisions on the proportions of each class vehicles to partially/fully charge, or directly serve customers are then optimized using a stochastic linear program that minimizes the maximum response time of the system. Results show the merits of our proposed model and optimized decision scheme compared to both the always-charge and the equal split schemes. 
\end{abstract}

\begin{IEEEkeywords}
Autonomous Mobility On-Demand; Electric Vehicle; Fog-based Architecture; Dispatching; Charging; Queuing Systems.
\end{IEEEkeywords}
\IEEEpeerreviewmaketitle
\section{Introduction}
\ignore{
Urban transportation systems are facing tremendous challenges nowadays due to the exploding demand on private vehicle ownership, which result in dramatic increases in road congestion, parking demand \cite{ref5}, increased travel times \cite{ref6}, and carbon footprint \cite{ref3} \cite{ref4}. This clearly calls for revolutionary solutions to sustain the future private mobility. Mobility on-demand (MoD) services \cite{ref7} were successful in providing a partial solution to the increased private vehicle ownership problems \cite{ref8}, by providing one-way vehicle sharing between dedicated pick-up and drop-off locations for a monthly subscription fee. The electrification of such MoD vehicles can also gradually reduce the carbon footprint problem. However, the need to make extra-trips to pick-up and after dropping off these MoD vehicle from and at these dedicated locations has significantly affected the convenience of this solution and reduced its effect in solving urban traffic problems.

Nonetheless, an expected game-changer for the success of these services is the significant advances in vehicle automation. With more than 10 million self-driving vehicles expected to be on the road by 2020 \cite{ref11},\ignore{ and the vision of governments and automakers to inject more electrification, wireless connectivity, and coordinated optimization on city roads,} it is strongly forecasted that private vehicle ownership will significantly decline by 2025, as individuals' private mobility will further depend on the concept of Autonomous Electric MoD (AEMoD) \cite{ref9}\cite{ref10}. Indeed, this service will relieve customers from picking-up and dropping-off vehicles at dedicated locations, parking hassle/delays/cost for parking, vehicle  insurance and maintenance costs, and provide them with added times of in-vehicle work and leisure. these autonomous mobility on-demand systems will significantly prevail in attracting millions of subscribers across the world and in providing on-demand and hassle-free mobility, especially in metropolitan areas. 
}

Urban transportation systems are facing tremendous challenges nowadays due to the dominant dependency and massive increases on private vehicle ownership, which result in dramatic increases in road congestion, parking demand \cite{ref5}, increased travel times \cite{ref6}, and carbon footprint \cite{ref3} \cite{ref4}. These challenges have significantly can be mitigated with the significant advances and gradual maturity of vehicle electrification, automation, and wireless connectivity. With more than 10 million self-driving cars expected to be on the road by 2020 \cite{ref11}, it is strongly forecasted that vehicle ownership will significantly decline by 2025, as it will be replaced by the novel concept of Autonomous Electric Mobility on-Demand (AEMoD) services \cite{ref9,ref10}. In such system, customers will simply need to press some buttons on an app to promptly get an autonomous electric vehicle transporting them door-to-door, with no pick-up/drop-off and driving responsibilities, no dedicated parking needs, no carbon emission, no vehicle insurance and maintenance costs, and extra in-vehicle work/leisure times. With these qualities, AEMoD systems will significantly prevail in attracting millions of subscribers across the world and in providing on-demand and hassle-free private urban mobility.\\ 
\indent Despite the great aspirations for wide AEMoD service deployments by early-to-mid next decade, the timeliness (and thus success) of such service is threatened by two major bottlenecks. First, the expected massive demand of AEMoD services will result in excessive if not prohibitive computational and communication delays if cloud based approaches are employed for the micro-operation of such systems. Moreover, the typical full-battery charging rates of electric vehicles will not be able to cope with the gigantic numbers of vehicles involved in these systems, thus resulting in instabilities and unbounded customer delays. Several recent works \cite{ref1,ref12} have addressed other important problems in autonomous mobility on-demand systems but none of them considered the computational architecture for a massive demand on such services, and the vehicle electrification and charging limitations. \\
\begin{figure}[t]
\centering
      \includegraphics[width=.35\textwidth]{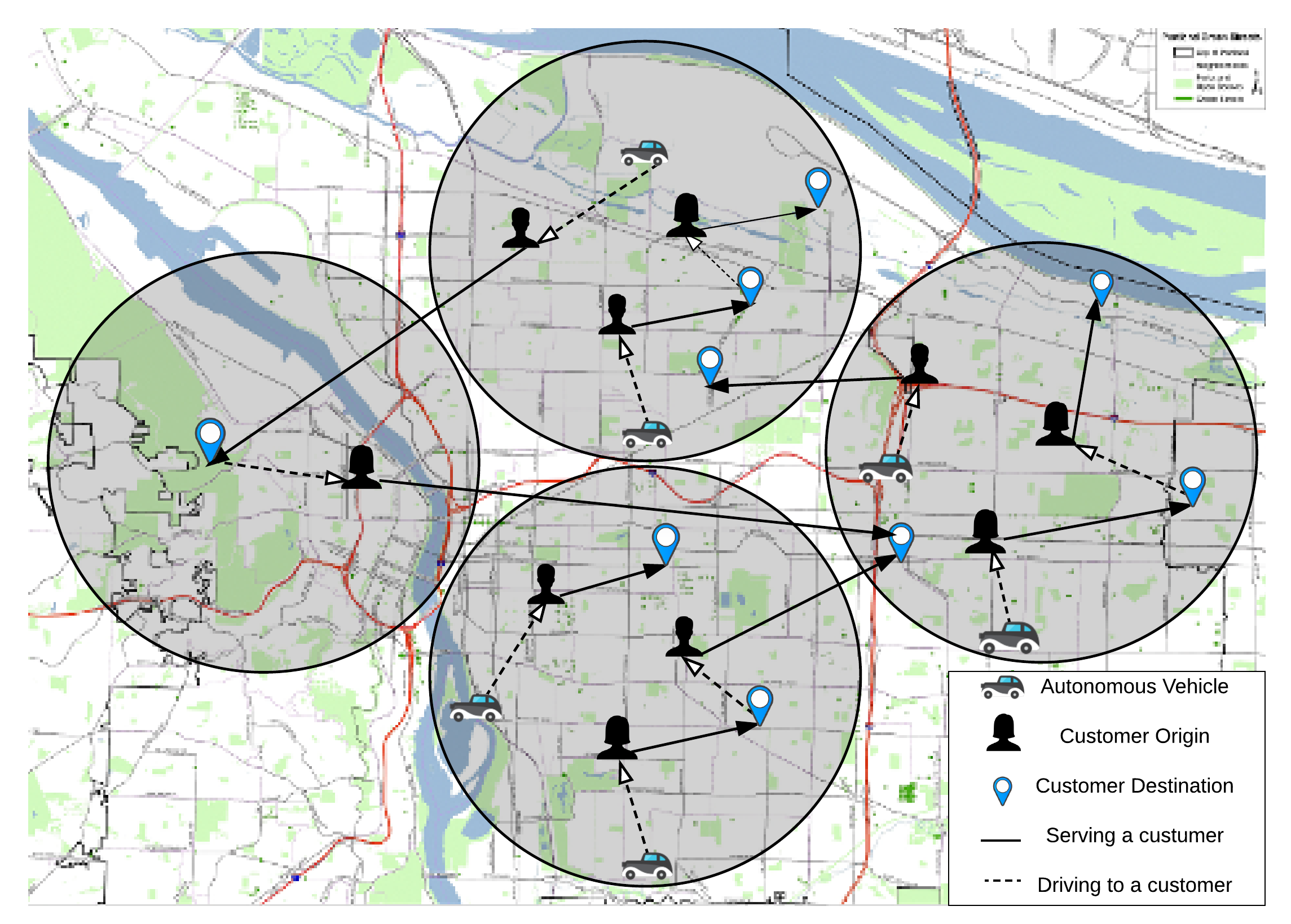}
      \caption{Fog-based architecture for AEMoD system operation}
      \label{fog}
 \end{figure}
\indent In this paper, we suggest to solve the first problem by handling AEMoD system operations in a distributed fashion using fog computing \cite{ref14}. Fog computing is a novel distributed edge computing architecture that pushes computational resources close to the end entities, to provide them with low latency analytics and optimization solutions. The use of fog computing is justified by the fact that many of the AEMoD operations (e.g., dispatching and charging) are localized. Indeed, vehicles located in any city zone are ones that can reach the customers in that zone within a limited time frame. They will also charge in near-by charging points within the zone. Fig.\ref{fog} illustrates a candidate fog-based architecture that can support real-time micro-operational decisions (e.g, dispatching and charging) for AEMoD systems with extremely low computation and communications delays. The fog controller in each service zone is responsible of collecting information about customer requests, vehicle in-flow to the service zone, their state-of-charge (SoC), and the available full-battery charging rates in the service zone. Given the collected information, it can promptly make dispatching, and charging decisions for these vehicles in a timely manner. \\ 
\indent One way to solve the second problem is to smartly cope with the available charging capabilities of each service zone, and efficiently manage the charging options of different SoC vehicles. This former solution can be achieved by introducing the option of partial vehicle charging instead of fully relying on full vehicle charging only when their batteries are depleted. To efficiently manage this partial charging scheme for different SoC vehicles, and motivated by the fact that different customers can be classified in ascending order of their required trip distances (and thus SoC of their allocated vehicles), this paper proposes a multi-class dispatching and charging system of AEMoD vehicles. Arriving vehicles in each service zone are subdivided into different classes in ascending order of their SoC corresponding to the different customer classes. Different proportions of each class vehicles will be then prompted by the fog controller to either wait (without charging) for dispatching to its corresponding customer class or partially charge to serve the subsequent customer class. Even vehicles arriving with depleted batteries will be allowed to either partially or fully charge to serve the first or last class customers, respectively. \\
\indent The question now is: \emph{What is the optimal proportion of vehicles from each class to dispatch or partially/fully charge, to both maintain charging stability and minimize the maximum response time of the system?} To address this question, a queuing model representing the proposed multi-class charging and dispatching scheme is first introduced. The stability conditions of this model and the number of classes that fit the charging capabilities of the service zone are then derived. Decisions on the proportions of each class vehicles to partially/fully charge, or directly serve customers are then optimized using a stochastic linear program that minimizes the maximum response time of the system. Finally, the merits of our proposed optimized decision scheme are tested and compared to both the always-charge and the equal split schemes.

\section{System Model}
We consider one service zone controlled by a fog controller connected to: (1) the service request apps of cutomers in the zone; (2) the AEMoD vehicles; (3) $C$ rapid charging points distributed in the service zone and designed for short-term partial charging; and (4) one spacious rapid charging station designed for long-term full charging. AEMoD vehicles enter the service in this zone after dropping off their latest customers in it. Their detection as free vehicles by the zone's controller can thus be modeled as a Poisson process with rate $\lambda_v $. Customers request service from the system according to a Poisson process. Both customers and vehicles are classified into $n$ classes based on an ascending order of their required trip distance and the corresponding SoC to cover this distance, respectively. From the thinning property of Poisson processes, the arrival process of Class $i$ customers and vehicles, $i \in \{0,\dots, n\}$, are both independent Poisson processes with rates $\lambda_c^{(i)}$ and $\lambda_v p_i$, where $p_i$ is the probability that the SoC of an arriving vehicle to the system belongs to Class $i$. Note that $p_0$ is the probability that a vehicle arrive with a depleted battery, and is thus not able to serve immediately. Consequently, $\lambda^{(0)}_c = 0$ as no customer will request a vehicle that cannot travel any distance. On the other hand, $p_n$ is also equal to 0, because no vehicle can arrive to the system fully charged as it has just finished a prior trip.\\
\ignore{
\begin{equation}\label{eq:1}
\begin{aligned}
\sum ^{n-1}_{i=0}p_i = 1 , \; 0 \leq p_{i} \leq 1  , \; i = 0, \ldots, n-1.
\end{aligned}
\end{equation}
}
\begin{figure}[t]
\centering
 \includegraphics[width=.5\textwidth]{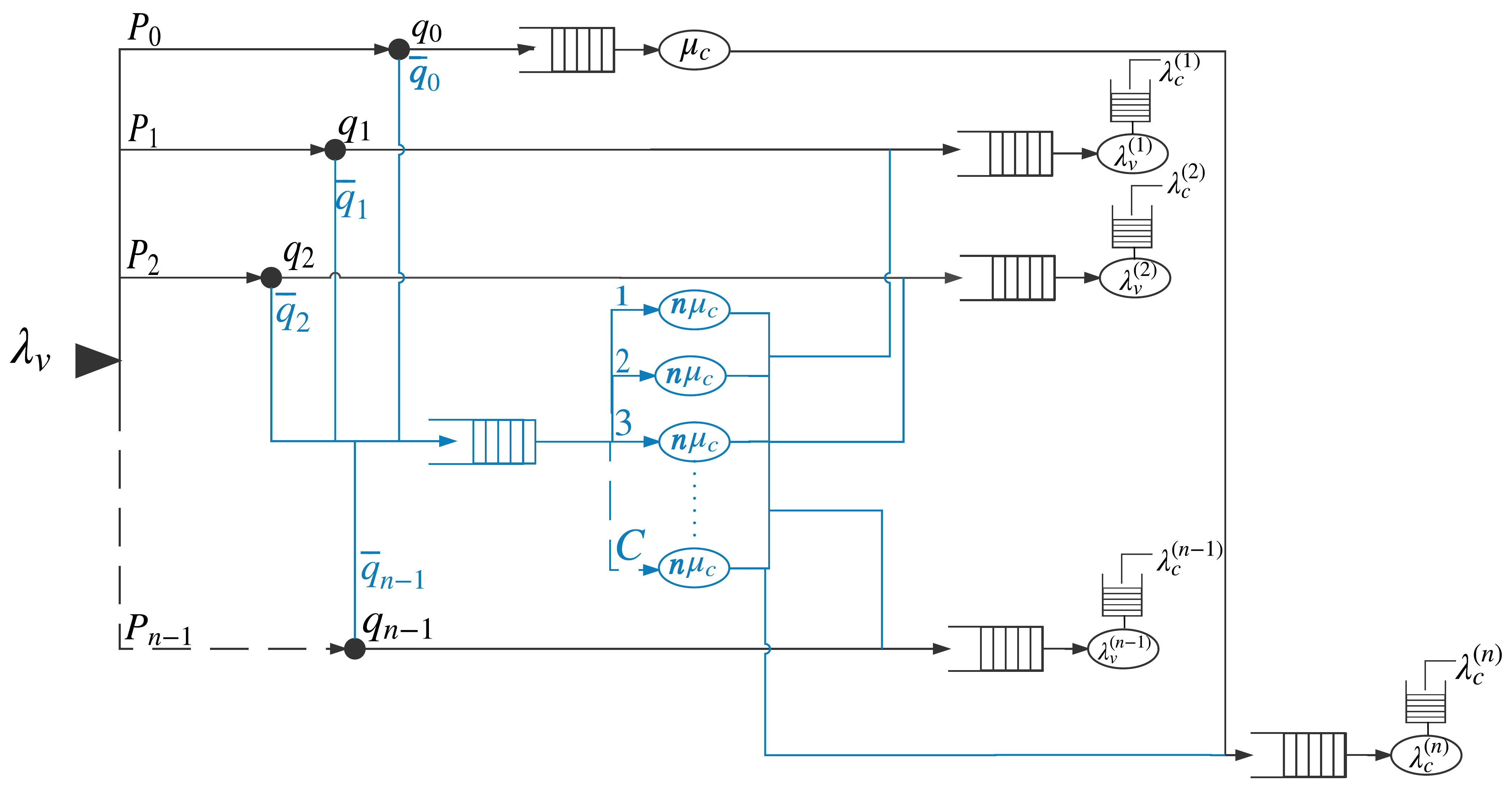}
\caption{Joint dispatching and partially/fully charging model, abstracting an AEMoD system in one service zone.}\label{fig:model}
    \end{figure}
    \ignore{
\begin{table}[t]
\centering
\caption{List of System and Decision Parameters}
\label{vartab}
\begin{tabular}{@{}ll@{}}
\toprule
Variables         & Definition                                                            \\ \midrule
$\lambda_v$       & Total arrival rate of vehicles                                        \\
$p_{i}$           & Probability of arrival of a vehicle from Class $i$      \\
$q_{0}$           & Probability that a battery-depleted vehicle partially charges                          \\
$\overline{q}_0$  & Probability that a battery-depleted vehicle fully charges           \\
$q_{i}, i\neq0$           & Probability that a vehicle in Class $i$ is directly dispatched                          \\
$\overline{q}_i,  i\neq0$  & Probability that a vehicle in Class $i$ partially charges           \\
$\mu_c$           & Service rate of fully charging a battery-depleted vehicle                 \\
$\lambda_v^{(i)}$ & Arrival rate of vehicles of Class $i$            \\
$\lambda_c^{(i)}$ & Arrival rate of customers served by Class $i$'s vehicles \\
$C$ & No. of distributed charging points in the service zone of the fog controller \\ \bottomrule
\end{tabular}
\end{table}
  }
\indent Upon arrival, each vehicle of Class $i$, $i\in\{1,\dots,n-1\}$, will park anywhere in the zone until it is called by the fog controller to either: (1) serve a customer from Class $i$ with probability $q_i$; or (2) partially charge up to the SoC of Class $i+1$ at any of the $C$ charging points (whenever any of them becomes free), with probability $\overline{q}_i=1-q_i$, before parking again in waiting to serve a customer from Class $i+1$. As for Class $0$ vehicles that are incapable of serving before charging,\ignore{ since they cannot serve any customer class without charging,} they will be directed to either fully charge at the central charging station with probability $q_0$, or partially charge at one of $C$ charging points with probability $\overline{q}_0 = 1-q_0$. In the former and latter cases, the vehicle after charging will wait to serve customers of Class $n$ and $1$, respectively.\\
\ignore{
\begin{equation}\label{eq:2}
\begin{aligned}
q_i  + \overline{q_i} =1 , \; 0 \leq q_{i} \leq 1  , \; i = 0, \ldots, n-1.
\end{aligned}
\end{equation}
}
\indent As widely used in the literature (e.g., \cite{ref15,ref16}), the full charging time of a vehicle with a depleted battery is assumed to be exponentially distributed with rate $\mu_c$. Given uniform SoC quantization among the $n$ vehicle classes, the partial charging time can then be modeled as an exponential random variable with rate $n\mu_c$. Note that the larger rate of the partial charging process is not due to a speed-up in the charging process but rather due to the reduced time of partially charging.\ignore{ The use of exponentially distributed charging times for charging electric vehicles has been widely used in the literature \cite{ref15,ref16} to model the randomness in the charging duration of the different battery sizes.} The customers belonging to Class $i$, arriving at rate $\lambda_c^{(i)}$, will be served at a rate of $\lambda_v^{(i)}$, which includes the arrival rate of vehicles that: (1) arrived to the zone with a SoC belonging to Class $i$ and were directed to wait to serve Class $i$ customers; or (2) arrived to the zone with a SoC belonging to Class $i-1$ and were directed to partially charge to be able to serve Class $i$ customers.\\
\indent Given the above description and modeling of variables\ignore{ of the vehicle dispatching and charging variables and options}, the entire zone dynamics can thus be modeled by the queuing system depicted in Fig.\ref{fig:model}. This system includes $n$ M/M/1 queues for the $n$ classes of customer service, one M/M/1 queue for the charging station, and one M/M/C queue representing the partial charging process at the $C$ charging points.\\
\indent Assuming that the service zones will be designed to guarantee a maximum time for a vehicle to reach a customer, our goal in this paper is to minimize the maximum expected response time of the entire system. By response time, we mean the time elapsed between the instant when an arbitrary customer requests a vehicle, and the instant when a vehicle starts moving from its parking or charging spot towards this customer.

\section{System stability conditions}
In this section, we first deduce the stability conditions of our proposed joint dispatching and charging system, using the basic laws of queuing theory. We will also derive an expression for the lower bound on the number $n$ of needed classes that fit the charging capabilities of any arbitrary service zone.
Each of the $n$ classes of customers are served by a separate queue of vehicles, with $ \lambda_v^{(i)} $ being the arrival rate of the vehicles that are available to serve the customers of the $i^{th}$ class. Consequently, it is the service rate of the customers $i^{th}$ arrival queues.
We can thus deduce from Fig. \ref{fig:model} and the system model in the previous section that :
\begin{equation}\label{eq:3}
\begin{aligned}
& & &\lambda_v^{(i)} = \lambda_v(p_{i-1}\overline{q}_{i-1} + p_{i}q_{i}) , \; i = 1, \ldots, n-1.\\
& & &\lambda_v^{(n)} = \lambda_v(p_{n-1}\overline{q}_{n-1} + p_{0}q_{0})\\
\end{aligned}
\end{equation}
Since we know that $\overline{q}_{i} + q_{i} = 1$
Then we substitute $\overline{q}_{i}$ by  $1 - q_{i}$ in order to have a system with $n$ variables
\begin{equation}\label{eq:4}
\begin{aligned}
& & &\lambda_v^{(i)} = \lambda_v(p_{i-1} - p_{i-1}{q_{i-1}} + p_{i}q_{i}) , \; i = 1, \ldots, n-1\\
& & &\lambda_v^{(n)} = \lambda_v(p_{n-1} - p_{n-1}{q_{n-1}} + p_{0}q_{0}) \ignore{, \; i = n}\\
\end{aligned}
\end{equation}
From the well-known stability condition of an M/M/1 queue, we have: 
\begin{equation}\label{eq:5}
\begin{aligned}
& & & \lambda_v^{(i)} > \lambda_c^{(i)} , \; i = 1, \ldots, n 
\end{aligned}
\end{equation}
\indent Before reaching the customer service queues, the vehicles will go through a decision step of either to go to these queues immediately or partially charge. The stability of the charging queues should be guaranteed in order to ensure the global stability of the entire system at the steady state. From the model described in the previous section, and by the well-known stability conditions of M/M/C and M/M/1 queues, we have the following stability constraints on the $C$ charging points and central charging station queues, respectively: 
\begin{equation}\label{eq:6}
\begin{aligned}
& & & \sum ^{n-1}_{i=0}\lambda_v(p_{i} - p_{i}{q_{i}}) < C (n \mu_c) \\
& & &\lambda_v p_{0}q_{0} < \mu_c
\end{aligned}
\end{equation}

The following lemma allows the estimation of the average needed in-flow rate of vehicles for a given service zone given its rate of customer demand on AEMoD services.
\begin{lemma}\label{lem1}
For the entire system stability, the total arrival rate of the customers belonging to all the classes should be strictly less than the total arrival rate of the vehicles 
 \begin{equation}\label{eq:7}
\begin{aligned}
\sum ^{n}_{i=1}{\lambda_c^{(i)}} < \lambda_v
\end{aligned}
\end{equation}
\end{lemma}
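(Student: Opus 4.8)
The plan is to aggregate the per-class M/M/1 stability conditions in~\eqref{eq:5} and to show that the resulting sum of vehicle arrival rates collapses to exactly $\lambda_v$. First I would sum the $n$ inequalities $\lambda_v^{(i)} > \lambda_c^{(i)}$ over $i = 1, \ldots, n$, giving immediately $\sum_{i=1}^{n} \lambda_c^{(i)} < \sum_{i=1}^{n} \lambda_v^{(i)}$. It then suffices to establish the identity $\sum_{i=1}^{n} \lambda_v^{(i)} = \lambda_v$, after which the claim follows by transitivity.

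To evaluate $\sum_{i=1}^{n} \lambda_v^{(i)}$, I would substitute the expressions from~\eqref{eq:3}, writing the sum as $\lambda_v$ times $\sum_{i=1}^{n-1}(p_{i-1}\overline{q}_{i-1} + p_i q_i) + p_{n-1}\overline{q}_{n-1} + p_0 q_0$. The central step is a reindexing: shifting $j = i-1$ in the partial-charging terms collects them, together with the boundary term $p_{n-1}\overline{q}_{n-1}$ contributed by Class $n$, into $\sum_{i=0}^{n-1} p_i \overline{q}_i$; meanwhile the direct-service terms, together with the boundary term $p_0 q_0$ from Class $n$, collect into $\sum_{i=0}^{n-1} p_i q_i$. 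Regrouping by class index then yields $\sum_{i=1}^{n}\lambda_v^{(i)} = \lambda_v \sum_{i=0}^{n-1} p_i (q_i + \overline{q}_i)$.

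Finally I would invoke the two normalizations already built into the model: $q_i + \overline{q}_i = 1$ for every $i$, and $\sum_{i=0}^{n-1} p_i = 1$ (valid because $p_n = 0$, as no vehicle can arrive fully charged). Together these reduce the bracketed factor to unity, so $\sum_{i=1}^{n}\lambda_v^{(i)} = \lambda_v$, and chaining this with the summed inequality delivers $\sum_{i=1}^{n}\lambda_c^{(i)} < \lambda_v$, as claimed.

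I expect the only real obstacle to be the bookkeeping in the reindexing step: one must verify that each probability $p_i$, $i = 0, \ldots, n-1$, is paired exactly once with $q_i$ and exactly once with $\overline{q}_i$, with the two boundary contributions of Class $n$ (the $p_0 q_0$ and $p_{n-1}\overline{q}_{n-1}$ terms) slotted into the correct places. This is where an off-by-one slip would most easily occur, but once the sums are properly aligned the stated bound is immediate.
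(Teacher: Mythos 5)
Your proposal is correct and follows essentially the same route as the paper's own proof: summing the per-class M/M/1 stability conditions, regrouping the terms $p_{i-1}\overline{q}_{i-1} + p_i q_i$ across classes so that each $p_i$ pairs once with $q_i$ and once with $\overline{q}_i$, and then invoking $q_i + \overline{q}_i = 1$ and $\sum_{i=0}^{n-1} p_i = 1$. The only cosmetic difference is that you isolate the identity $\sum_{i=1}^{n}\lambda_v^{(i)} = \lambda_v$ as an explicit intermediate step, whereas the paper carries the inequality through the regrouping directly; the bookkeeping in your reindexing step is exactly right.
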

\begin{proof}
The proof of Lemma \ref{lem1} is in Appendix A\ignore{ of \cite{ref13}}.
\end{proof}

Furthermore, the following lemma establishes a lower bound on the number of classes $n$, given the arrival rate of the vehicles $\lambda_v$, the full charging rate $\mu_c$, and the number $C$ of partial charging points. 
\begin{lemma}\label{lem2}
For stability of the charging queues, the number of classes $n$ is the system must obey the following inequality:
\begin{equation}\label{eq:8}
\begin{aligned}
n  > \dfrac{\lambda_v}{C \mu_c} - \dfrac{1}{C}
\end{aligned}
\end{equation}
\end{lemma}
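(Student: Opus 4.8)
The plan is to derive the bound directly from the two charging-stability constraints in (\ref{eq:6}), which are the only conditions in the excerpt that involve the charging capacities $Cn\mu_c$ and $\mu_c$. First I would add the M/M/C stability constraint on the $C$ partial-charging points to the M/M/1 stability constraint on the central full-charging station, so that their right-hand sides combine into the total charging capacity $Cn\mu_c + \mu_c = (Cn+1)\mu_c$. This reduces the problem to bounding the aggregate charging inflow on the left-hand side from above by a quantity free of the decision variables $q_i$ and the class probabilities $p_i$.

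Next I would simplify that aggregate inflow. Separating the $i=0$ term from the sum $\sum_{i=0}^{n-1}\lambda_v p_i\overline{q}_i$ and combining it with the full-charging term $\lambda_v p_0 q_0$, the identity $q_0+\overline{q}_0=1$ collapses the two Class-$0$ contributions into the single term $\lambda_v p_0$, leaving $\lambda_v p_0 + \sum_{i=1}^{n-1}\lambda_v p_i\overline{q}_i$. The key observation is that this expression is largest in the worst-case (always-charge) regime where every vehicle is routed to some charger, i.e.\ $\overline{q}_i = 1$ for all $i\ge 1$. Using the normalization $\sum_{i=0}^{n-1} p_i = 1$ (recall $p_n=0$), the remaining sum collapses to $\lambda_v\sum_{i=0}^{n-1}p_i=\lambda_v$, so the aggregate charging inflow never exceeds $\lambda_v$.

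Combining the two steps, stability of the charging subsystem in this binding regime forces $\lambda_v < (Cn+1)\mu_c$; rearranging for $n$ yields $n > \lambda_v/(C\mu_c) - 1/C$, which is exactly (\ref{eq:8}).

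I expect the main obstacle to be conceptual rather than computational: justifying why the always-charge configuration ($\overline{q}_i=1$) is the correct worst case to impose. The point is that $n$ must be provisioned so the charging queues stay stable under any admissible dispatching policy, including the extreme in which every arriving vehicle is sent through a charger; only then is the full arrival rate $\lambda_v$ presented to the combined capacity $(Cn+1)\mu_c$. I would also be careful to note that the central station's rate $\mu_c$ may legitimately be added to the partial capacity here because the Class-$0$ split $q_0,\overline{q}_0$ lets charging load flow to both facilities, which is precisely what the summation of the two constraints in (\ref{eq:6}) encodes.
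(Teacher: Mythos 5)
Your proof is correct and follows essentially the same route as the paper's own argument in Appendix B: summing the two charging-queue stability constraints in (\ref{eq:6}), exploiting the cancellation of the Class-$0$ terms together with $\sum_{i=0}^{n-1}p_i=1$, and then imposing the worst-case always-charge regime ($q_i=0$, i.e.\ $\overline{q}_i=1$) to obtain $\lambda_v<(Cn+1)\mu_c$, which rearranges to (\ref{eq:8}). Your added remark justifying why the always-charge configuration is the binding case simply makes explicit what the paper leaves implicit.
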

\begin{proof}
The proof of Lemma \ref{lem2} is in Appendix B\ignore{ of \cite{ref13}}.
\end{proof}

\section{Joint Charging and Dispatching optimization}
\subsection{Problem Formulation}
The goal of this paper is to minimize the maximum expected response time of the system's classes. The response time of any class is defined as the average of the duration between any customer putting a request until a vehicle is dispatched to serve him/her. The maximum expected response time of the system can be expressed as: 
\begin{equation} \label{eq:9}
\max_{i\in\{1,\dots,n\}} \left\{\dfrac {1}{\lambda_v^{(i)}-\lambda_c^{(i)}}\right\}
\end{equation}
\ignore{
Minimizing $\dfrac {1}{\lambda_v^{(i)}-\lambda_c^{(i)}} $ will lead to the minimization of the average response time of the entire system.The problem here is our objective function will be non-convex which may lead to several solving problems and we will not be able to guarantee an absolute optimal solution.
If we want to have a system in which all the users do not wait more than a certain time T, no matter in which queue they are and no matter how long the trip they are requesting.So we try to find a common upper bound to all the queues delays noted $ T = \dfrac{1}{R} $ so that none of the customers will wait more than it.   
\begin{equation}\label{eq:10}
\begin{aligned}
\dfrac {1}{\lambda_v^{(i)}-\lambda_c^{(i)}} \leq \dfrac{1}{R}, \; R > 0\\
{\lambda_v^{(i)}-\lambda_c^{(i)}} \geq {R} , \; R > 0
\end{aligned}
\end{equation}
This optimization problem can be solved by using a similar methodology to the Piecewise-linear minimization. Given the above constraints, the right decisions to charge the vehicle or serve the customer in order to minimize the average response time for all the customers can be obtained by solving the following optimization problem : 
}

It is obvious that the system's class having the maximum expected response time is the one that have the minimum expected response rate. In other words, we have:
\begin{equation}\label{eq:10}
\arg\max_{i\in\{1,\dots,n\}} \left\{\dfrac {1}{\lambda_v^{(i)}-\lambda_c^{(i)}}\right\} = \arg\min_{i\in\{1,\dots,n\}} \left\{\lambda_v^{(i)}-\lambda_c^{(i)}\right\}
\end{equation}.
Consequently, minimizing the maximum expected response time is equivalent to maximizing the minimum expected response rate. Using the epigraph form \cite{refconvxbook} of the latter problem,\ignore{ and substituting the $\lambda_{v}^{(i)}$ parameters by their expressions in (\ref{eq:4}),} we get the following stochastic optimization problem:
\begin{subequations}\label{eq:11}
\begin{align}
&\qquad\qquad \underset{q_0,q_1,\ldots ,q_{n-1}}{\text{maximize }} R \\
\text{s.t}  & \nonumber\\
&\lambda_v(p_{i-1} - p_{i-1}{q_{i-1}} + p_{i}q_{i}) - \lambda_c^{(i)} \geq R, \; i = 1, \ldots, n-1 \label{eq:11-C1}\\
&\lambda_v(p_{n-1} - p_{n-1}{q_{n-1}} + p_{0}q_{0}) - \lambda^{(n)}_c \geq R \ignore{,\; i = n} \label{eq:11-C2}\\
&\sum ^{n-1}_{i=0}\lambda_v(p_{i} - p_{i}{q_{i}}) < C (n \mu_c) \label{eq:11-C3}\\
&\lambda_v p_{0}q_{0} < \mu_c  \label{eq:11-C4} \\
& \sum ^{n-1}_{i=0}p_i = 1 , \; 0 \leq  p_{i} \leq  1  , \; i = 0, \ldots, n-1  \label{eq:11-C5}\\
&0 \leq q_{i} \leq 1  , \; i = 0, \ldots, n-1  \label{eq:11-C6}\\
&R > 0  \label{eq:11-C7}
\end{align}
\end{subequations}
The $n$ constraints in (\ref{eq:11-C1}) and (\ref{eq:11-C2}) represent the epigraph form's constraints on the original objective function in the right hand side of (\ref{eq:10}), after separation \cite{refconvxbook} and substituting every $\lambda_v^{(i)}$ by its expansion form in (\ref{eq:4}). The constraints in (\ref{eq:11-C3}) and (\ref{eq:11-C4}) represent the stability conditions on charging queues. The constraints in (\ref{eq:11-C5}) and (\ref{eq:11-C6}) are the axiomatic constraints on probabilities (i.e., values being between 0 and 1, and sum equal to 1). Finally, Constraint (\ref{eq:11-C7}) is a positivity constraint on the minimum expected response rate.\ignore{\\
\indent} Clearly, the above equation is a linear program with linear constraints, which can be solved analytically using convex optimization analysis. This will be the target of the next subsection.

\subsection{Optimal Dispatching and Charging Decisions}
The problem in (\ref{eq:11}) is a convex optimization problem with differentiable objective and constraint functions that satisfies Slater's condition. Consequently, the KKT conditions provide necessary and sufficient conditions for optimality. Therefore, applying the KKT conditions to the constraints of the problem and the gradient of the Lagrangian function allows us to find the analytical solution of the decisions $q_i$.
The Lagrangian function associated with the optimization problem in (\ref{eq:11}) is given by the following expression:
\begin{multline}\label{eq:12}
\begin{aligned}
& L(\mathbf{q},R,\boldsymbol{\alpha},\boldsymbol{\beta},\boldsymbol{\gamma},\boldsymbol{\omega}) =
 - R +\sum_{i=1}^{n-1}\alpha_{i}( \lambda_v(p_{i-1}{q_{i-1}} -  p_{i}q_{i} ) \\ 
& + R - \lambda_v p_{i-1} + \lambda_c^{(i)} ) + \alpha_{n} ( \lambda_v( p_{n-1}q_{n-1} - p_{0}{q_{0}} )\\
& + R - \lambda_v p_{n-1} + \lambda_c^{(n)} ) + \beta_{0} (\sum ^{n-1}_{i=0}\lambda_v(p_{i} - p_{i}{q_{i}}) - C (n \mu_c) ) \\ 
& + \beta_{1}(\lambda_v p_{0}q_{0} - \mu_c)+ \sum_{i=0}^{n-1} \gamma_{i}(q_{i} - 1) - \sum_{i=0}^{n-1}\omega_{i}q_{i} + \omega_{n}R\;,
\end{aligned}
\end{multline}
where $\mathbf{q}$ is the vector of dispatching decisions (i.e. $\mathbf{q} = [q_0,\dots,q_{n-1}]$), and where:
\begin{itemize}
\item $\boldsymbol{\alpha} = [\alpha_{i}]$, such that $\alpha_{i}$ is the associated Lagrange multiplier to the $i$-th customer queues inequality.
\item $\boldsymbol{\beta} = [\beta_{i}]$, such that $\beta{i}$ is the associated Lagrange multiplier to the $i$-th  charging queues inequality.
\item $\boldsymbol{\gamma} = [\gamma_{i}]$, such that $\gamma_{i}$ is the associated Lagrange multiplier to the $i$-th  upper bound inequality.
\item $\boldsymbol{\omega} = [\omega_{i}]$, such that $\omega_{i}$ is the  associated Lagrange multiplier to the $i$-th lower bound inequality.
\end{itemize}
\indent By applying the KKT conditions on the equality and inequality constraints, the following theorem illustrates the optimal solution of the problem in (\ref{eq:11}) 
\begin{theorem}\label{thm1}
The optimal charging/dispatching decisions of the optimization problem in (\ref{eq:11}) can be expressed as follows:
\begin{equation}\label{eq:15}
\begin{aligned}
& q_{0}^* = \begin{cases}
            0 &  \alpha_{1}^* > \alpha_{n}^* \\
            1 &  \alpha_{1}^* < \alpha_{n}^* \\
            \end{cases} \\ 
& q_{i}^* =   \begin{cases}
            0 &  \alpha_{i+1}^* > \alpha_{i}^* \\
            1 &  \alpha_{i+1}^* < \alpha_{i}^*  \\
            \end{cases}  , \;  i= 1, \ldots, n-1. \\      
& \text{if } \alpha_{1}^* = \alpha_{n}^* \ne 0 \begin{cases}
             q_{1}^* =\frac{p_{0}{q_{0}^*}}{p_{1}} - \frac{\lambda_v p_{0} - \lambda^{(1)}_c - R^*}{\lambda_v p_{1} }\\
           q_{n-1}^* =\frac{p_{0}{q_{0}^*}}{p_{n-1}} - \frac{\lambda_v p_{0} - \lambda^{(n)}_c - R^*}{\lambda_v p_{n-1} }\\
            \end{cases}  \\       
& \text{if } \alpha_{i+1}^* = \alpha_{i}^* \ne 0
 \begin{cases}
               q_{i}^* =\frac{p_{i-1}{q_{i-1}^*}}{p_{i}} - \frac{\lambda_v p_{i-1} - \lambda_c^{(i)} - R^*}{\lambda_v p_{i} } \\
          q_{i+1}^* =\frac{p_{i}{q_{i}^*}}{p_{i+1}} - \frac{\lambda_v p_{i} - \lambda^{(i+1)}_c - R^*}{\lambda_v p_{i+1} }
            \end{cases}\\
            & i= 1, \ldots, n-1.
\end{aligned}
\end{equation}
\end{theorem}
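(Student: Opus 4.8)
The plan is to lean on the structural observation already made before the theorem: the program (\ref{eq:11}) is a convex (indeed linear) program with differentiable data satisfying Slater's condition, so the KKT system is necessary and sufficient for optimality. I would therefore start by writing out the full KKT system associated with the Lagrangian (\ref{eq:12}): stationarity $\partial L/\partial q_i = 0$ for $i = 0,\dots,n-1$ and $\partial L/\partial R = 0$; primal feasibility (\ref{eq:11-C1})--(\ref{eq:11-C7}); dual feasibility $\alpha_i^*,\beta_i^*,\gamma_i^*,\omega_i^* \ge 0$; and the complementary-slackness relations, in particular $\gamma_i^*(q_i^*-1)=0$ and $\omega_i^* q_i^* = 0$ on the box constraints, together with $\alpha_i^*\cdot(\text{slack of customer constraint }i)=0$.

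The first simplification is to note that the charging-capacity constraints (\ref{eq:11-C3}) and (\ref{eq:11-C4}) are strict inequalities: they cannot bind at any feasible point, so complementary slackness forces $\beta_0^* = \beta_1^* = 0$. Substituting this, the stationarity equations collapse to the clean form $\lambda_v p_i(\alpha_{i+1}^* - \alpha_i^*) + \gamma_i^* - \omega_i^* = 0$ for $i=1,\dots,n-1$ and $\lambda_v p_0(\alpha_1^* - \alpha_n^*) + \gamma_0^* - \omega_0^* = 0$ for the depleted class. Differentiating in $R$ and using $R^*>0$ (so $\omega_n^*=0$) gives the normalization $\sum_{i=1}^n \alpha_i^* = 1$, which in particular guarantees the $\alpha_i^*$ do not all vanish.

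The heart of the argument is then a sign analysis of each collapsed stationarity equation. When $\alpha_{i+1}^* > \alpha_i^*$, the leading term $\lambda_v p_i(\alpha_{i+1}^* - \alpha_i^*)$ is strictly positive, so with $\gamma_i^*,\omega_i^*\ge 0$ the equation forces $\omega_i^* > 0$; complementary slackness $\omega_i^* q_i^* = 0$ then yields $q_i^* = 0$. Symmetrically, $\alpha_{i+1}^* < \alpha_i^*$ forces $\gamma_i^* > 0$ and hence $q_i^* = 1$. The identical reasoning on the depleted-class equation, driven now by $\alpha_1^*$ versus $\alpha_n^*$, gives the two bang-bang branches of $q_0^*$. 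This recovers all the non-degenerate cases of the theorem.

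The delicate case, which I expect to be the main obstacle, is the tie $\alpha_{i+1}^* = \alpha_i^* \ne 0$ (and likewise $\alpha_1^* = \alpha_n^* \ne 0$). Here the stationarity equation degenerates to $\gamma_i^* = \omega_i^*$; were this common value positive, complementary slackness would force $q_i^* = 0$ and $q_i^* = 1$ at once, a contradiction, so $\gamma_i^* = \omega_i^* = 0$ and stationarity no longer constrains $q_i^*$ at all. The decision must therefore be pinned down through a different channel: since $\alpha_i^* \ne 0$, complementary slackness makes customer constraint (\ref{eq:11-C1}) of index $i$ bind with equality, and solving that equality for $q_i^*$ in terms of the preceding decision $q_{i-1}^*$ and the optimal rate $R^*$ reproduces exactly the closed form stated in the theorem; applying the same step to the active constraints of indices $i+1$ (and, in the $q_0$ tie, indices $1$ and $n$) yields the coupled expressions for $q_{i+1}^*$, $q_1^*$ and $q_{n-1}^*$. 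The remaining care is to confirm that these closed-form values automatically lie in $[0,1]$, which is inherited from primal feasibility, so that the constructed multipliers and primal variables constitute a genuine KKT point and hence, by sufficiency, the optimum.
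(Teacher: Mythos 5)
Your proposal follows essentially the same route as the paper's own proof: invoke the KKT conditions (justified by Slater's condition), argue $\beta_0^*=\beta_1^*=0$ and $\omega_n^*=0$ from the strict charging-stability and $R>0$ constraints, obtain the stationarity relations $\lambda_v p_i(\alpha_{i+1}^*-\alpha_i^*)=\omega_i^*-\gamma_i^*$ and the normalization of the $\alpha_i^*$, perform a sign analysis to get the bang-bang cases, and use complementary slackness with $\alpha_i^*\neq 0$ to make the customer-queue constraints bind and solve them for the tie cases. Your execution is in fact marginally cleaner than the paper's (you argue signs directly on the stationarity equations rather than through the paper's multiplied identities, you handle the tie case $\gamma_i^*=\omega_i^*=0$ by explicit contradiction, and you write the normalization correctly as $\sum_{i=1}^{n}\alpha_i^*=1$ where the paper's appendix has an index typo), but the substance and structure are the same.
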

\begin{proof}
The proof of Theorem \ref{thm1} is in Appendix C\ignore{ in \cite{ref13}}.
\end{proof}

\subsection{Maximum Expected Response Time}
Again, since the problem in (\ref{eq:11}) is convex with convex with differentiable objective and constraint functions, then strong duality holds, which implies that the solution to the primal and dual problems are identical. By solving the dual problem, we can express the optimal value of the maximum expected response time as the reciprocal of the  minimum expected response rate of the system, which is introduced in the following theorem.

\begin{theorem}\label{thm2}
The minimum expected response rate $R^*$ of the entire system can be expressed as: 
 \begin{equation}
 R^* = \sum_{i=1}^{n} \left(\lambda_v p_{i-1} - \lambda_c^{(i)}\right) \alpha_{i}^* + \sum_{i=0}^{n-1} \gamma_i^*
\ignore{
\begin{aligned}
&g(\alpha^*,\beta^*,\gamma^*,\omega^*) = -R^*\\
&R^* = \sum_{i=1}^{n} (\lambda_v p_{i-1} - \lambda_c^{(i)}) \alpha_{i}^* + \sum_{i=0}^{n-1} \gamma_i^*
\end{aligned}
}
\end{equation}
\end{theorem}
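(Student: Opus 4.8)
The plan is to obtain $R^*$ through strong duality. As already noted before Theorem~\ref{thm1}, the program in (\ref{eq:11}) is convex with differentiable objective and constraints and satisfies Slater's condition, so the optimal dual value equals the optimal primal value. Since the primal maximizes $R$ (equivalently, minimizes $-R$, which is why the Lagrangian in (\ref{eq:12}) opens with $-R$), the optimal primal value is $-R^*$, and at the optimal multipliers the dual function satisfies $g(\boldsymbol{\alpha}^*,\boldsymbol{\beta}^*,\boldsymbol{\gamma}^*,\boldsymbol{\omega}^*) = -R^*$. The whole argument therefore reduces to evaluating $g(\boldsymbol{\alpha},\boldsymbol{\beta},\boldsymbol{\gamma},\boldsymbol{\omega}) = \inf_{\mathbf{q},R} L$ at the optimum and reading off $R^* = -g(\boldsymbol{\alpha}^*,\boldsymbol{\beta}^*,\boldsymbol{\gamma}^*,\boldsymbol{\omega}^*)$.

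First I would exploit the fact that the Lagrangian in (\ref{eq:12}) is affine in every primal variable, i.e.\ in $R$ and in each decision $q_i$. Consequently the infimum over $(\mathbf{q},R)$ is finite only when the coefficient of each primal variable vanishes; these vanishing conditions are exactly the stationarity equations $\partial L/\partial R = 0$ and $\partial L/\partial q_i = 0$ already used in the KKT system of Theorem~\ref{thm1}. Whenever they hold, $g$ collapses to the sum of those terms in $L$ that multiply no primal variable, so the task becomes the collection of these constant terms.

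Next I would gather the constant contributions term by term. The $\alpha$-terms contribute $-\sum_{i=1}^{n}(\lambda_v p_{i-1}-\lambda_c^{(i)})\alpha_i$; the $\gamma$-terms contribute $-\sum_{i=0}^{n-1}\gamma_i$; and the $\omega$-terms contribute nothing, since each $\omega_i$ multiplies a primal variable. For the charging-queue multipliers I would use the normalization $\sum_{i=0}^{n-1}p_i=1$ from (\ref{eq:11-C5}) to reduce the constant part of the $\beta_0$-term to $\beta_0(\lambda_v-Cn\mu_c)$, and the constant part of the $\beta_1$-term to $-\beta_1\mu_c$.

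The crux, and the step I expect to demand the most care, is the elimination of the two $\beta$-contributions. Because the charging-stability constraints (\ref{eq:11-C3}) and (\ref{eq:11-C4}) are strict inequalities, they hold with slack at every feasible operating point and hence are inactive at the optimum; complementary slackness then forces $\beta_0^*=\beta_1^*=0$, annihilating both residual terms. Substituting this into the collected constant terms gives $g(\boldsymbol{\alpha}^*,\boldsymbol{\beta}^*,\boldsymbol{\gamma}^*,\boldsymbol{\omega}^*) = -\sum_{i=1}^{n}(\lambda_v p_{i-1}-\lambda_c^{(i)})\alpha_i^* - \sum_{i=0}^{n-1}\gamma_i^*$, and since strong duality yields $R^* = -g(\boldsymbol{\alpha}^*,\boldsymbol{\beta}^*,\boldsymbol{\gamma}^*,\boldsymbol{\omega}^*)$, the claimed expression follows immediately.
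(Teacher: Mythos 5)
Your proposal is correct and follows essentially the same route as the paper's proof: invoke strong duality (via Slater's condition), observe that the dual function of the linear program is finite only where the coefficients of the primal variables vanish, and then read off $R^*$ as the negative of the surviving constant terms, with $\beta_0^*=\beta_1^*=0$ (from the strictly slack charging-queue constraints) killing the remaining contributions. The only difference is presentational --- the paper routes the argument through the standard-form matrix LP and its dual $g(\boldsymbol{\nu})=-\mathbf{b}^T\boldsymbol{\nu}$, whereas you collect the constant terms of the scalar Lagrangian (\ref{eq:12}) directly and justify dropping the $\beta$-terms explicitly by complementary slackness, a step the paper leaves implicit by citing its Theorem~\ref{thm1} analysis.
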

\begin{proof}
The proof of Theorem \ref{thm2} is in Appendix D\ignore{ in \cite{ref13}}.
\end{proof}

\section{Simulation Results}
In this section, we test the merits of our proposed scheme using extensive simulations. The simulation metrics used to evaluate these merits are the maximum and average expected response times of the different classes. For all the performed simulation figures, the full-charging rate of a vehicle is set to $\mu_c = 0.033$ mins$^{-1}$, and the number of charging points $C=40$.\\
\indent Fig. \ref{fig:lemmas} illustrates both the interplay of $\lambda_v$ and $\sum ^{n}_{i=1}{\lambda_c^{(i)}}$, established in Lemma \ref{lem1}, and effect of increasing the number of classes $n$ beyond its strict lower bound introduced in Lemma \ref{lem2}.Fig. \ref{fig:lemmas} depict the maximum and average expected response time for different values of $\sum ^{n}_{i=1}{\lambda_c^{(i)}}$, while fixing $\lambda_v$ to 15 min$^{-1}$. For this setting, $n=12$ is the smallest number of classes that satisfy the stability condition in Lemma \ref{lem2}. \\
\indent It is easy to notice that the response times for all values of $n$ increase dramatically when the $\sum ^{n}_{i=1}{\lambda_c^{(i)}}$ approaches $\lambda_v$. Moreover, it shows clearly that increasing $n$ beyond its stability lower bound increases both the maximum and average response times. We thus conclude that the optimal number of classes is the smallest value satisfying Lemma \ref{lem2}:
\begin {small}
\begin{equation}
n^* = \begin{cases}  \dfrac{\lambda_v}{C \mu_c} - \dfrac{1}{C} + 1 &  \dfrac{\lambda_v}{C \mu_c} - \dfrac{1}{C} ~\text{integer}\\
\left\lceil \dfrac{\lambda_v}{C \mu_c} - \dfrac{1}{C}\right\rceil & \text{Otherwise}
\end{cases}
\end{equation}
\end{small}
\indent Fig. \ref{fig:comp-ditributions} depicts the expected response time performances for different distributions of the vehicle SoC and customer trip distances. In this study, we fix $\lambda_v = 8$ and thus $n=7$. We can infer from the figure that the response times for Gaussian distributions of trip distances and both Gaussian or decreasing ones for SoCs are the lowest and exhibit the least response time variance. Luckily, these are the most realistic distributions for both variables. This is justified by the fact that vehicles arrive to the system after trips of different distances, which makes their SoC either Gaussian or slightly decreasing. Likewise, customers requiring mid-size distances are usually more than those requiring very small and very long distances.\\
\indent Fig. \ref{fig:comp-decision} compares the expected response time performances against $\sum ^{n}_{i=1}{\lambda_c^{(i)}}$, for different decision approaches, namely our derived optimal decisions, always partially charge decisions (i.e. $q_i=1~\forall~i$) and equal split decisions (i.e. $q_i=0.5~\forall~i$), for $\lambda_v = 8$ and thus $n=7$. These two schemes represent non-optimized policies, in which each vehicle takes its own fixed decision irrespective of the system parameters. The figure clearly shows superior maximum and average performances for our derived optimal policy compared to the other two policies, especially as $\sum ^{n}_{i=1}{\lambda_c^{(i)}}$ gets closer to $\lambda_v$, which are the most properly engineerd sceanrios (as large differences between these two quantities results in very low utilization). Gains of 13.3\% and 21.3\% on the average and maximum performances can be noticed compared to the always charge policy. This shows the importance of our proposed scheme in achieving better customer satisfaction.
\begin{figure}[t]
\centering
  \includegraphics[width=0.77\linewidth]{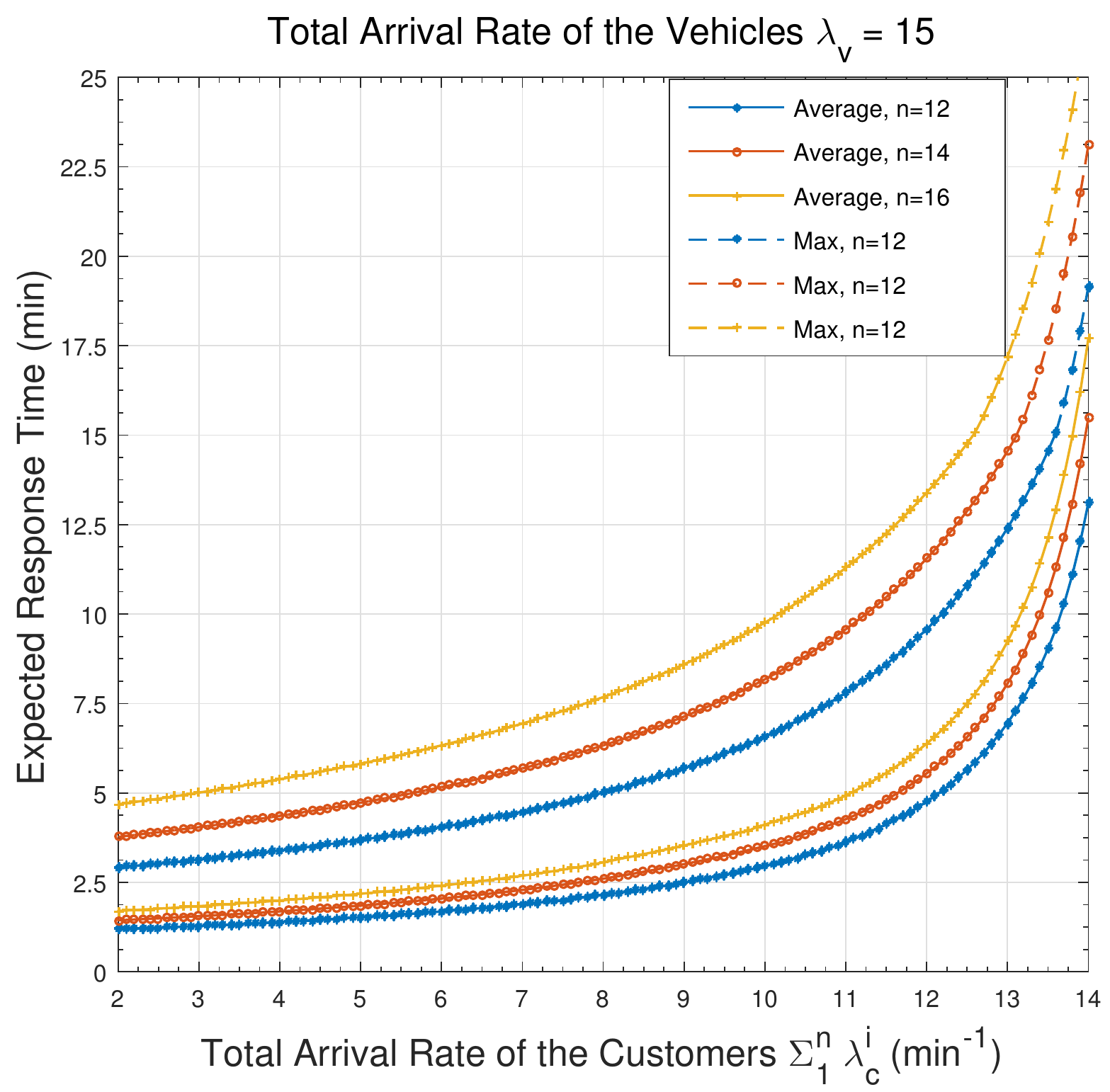}
    \caption{Expected response times for different $\sum ^{n}_{i=1}{\lambda_c^{(i)}}$}\label{fig:lemmas}
\end{figure}
\begin{figure}[t]
\centering
  \includegraphics[width=0.77\linewidth]{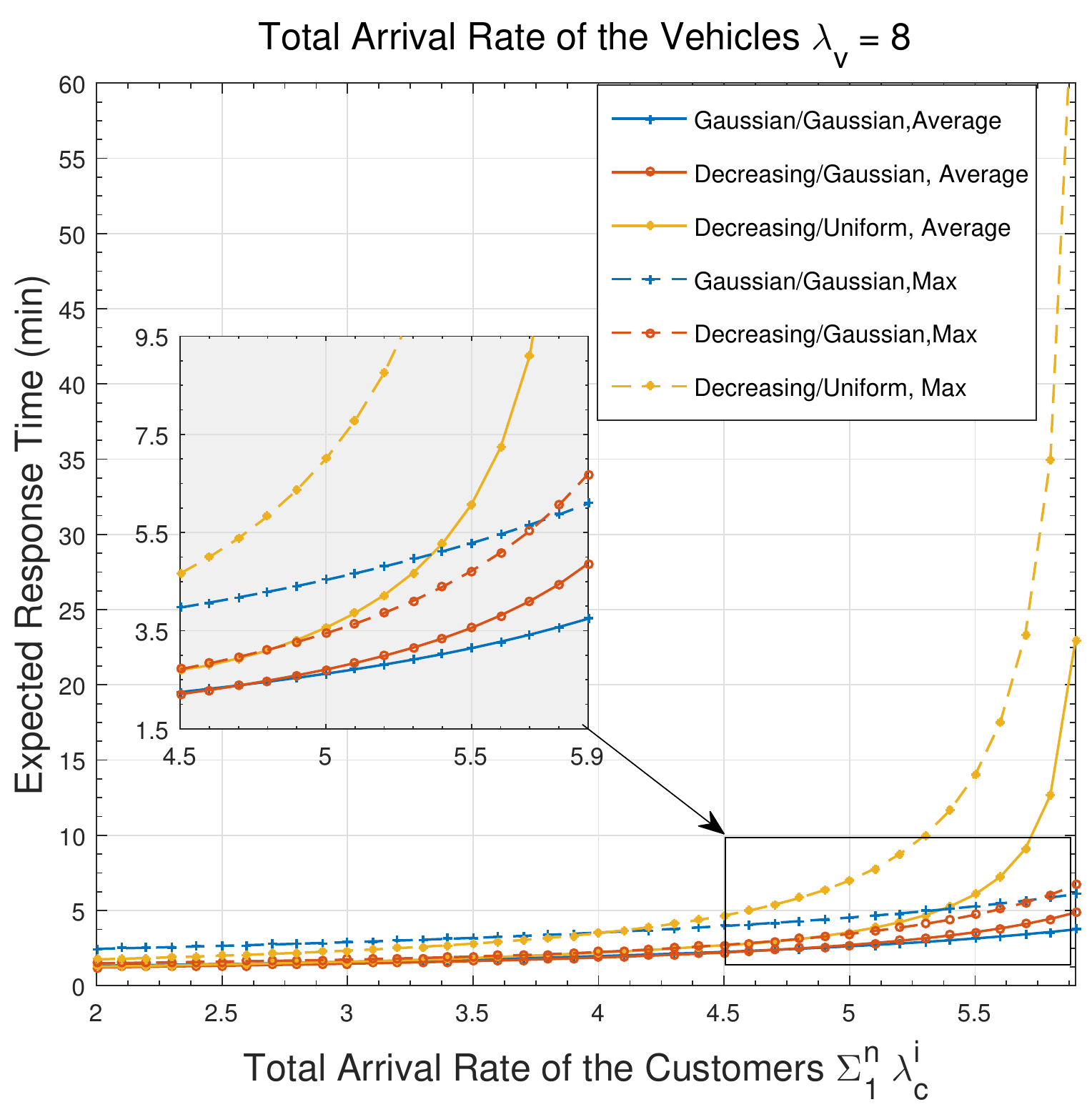}
    \caption{Effect of different customer and SoC distributions. \label{fig:comp-ditributions} }
\end{figure}
\begin{figure}[t]
\centering
  \includegraphics[width=0.77\linewidth]{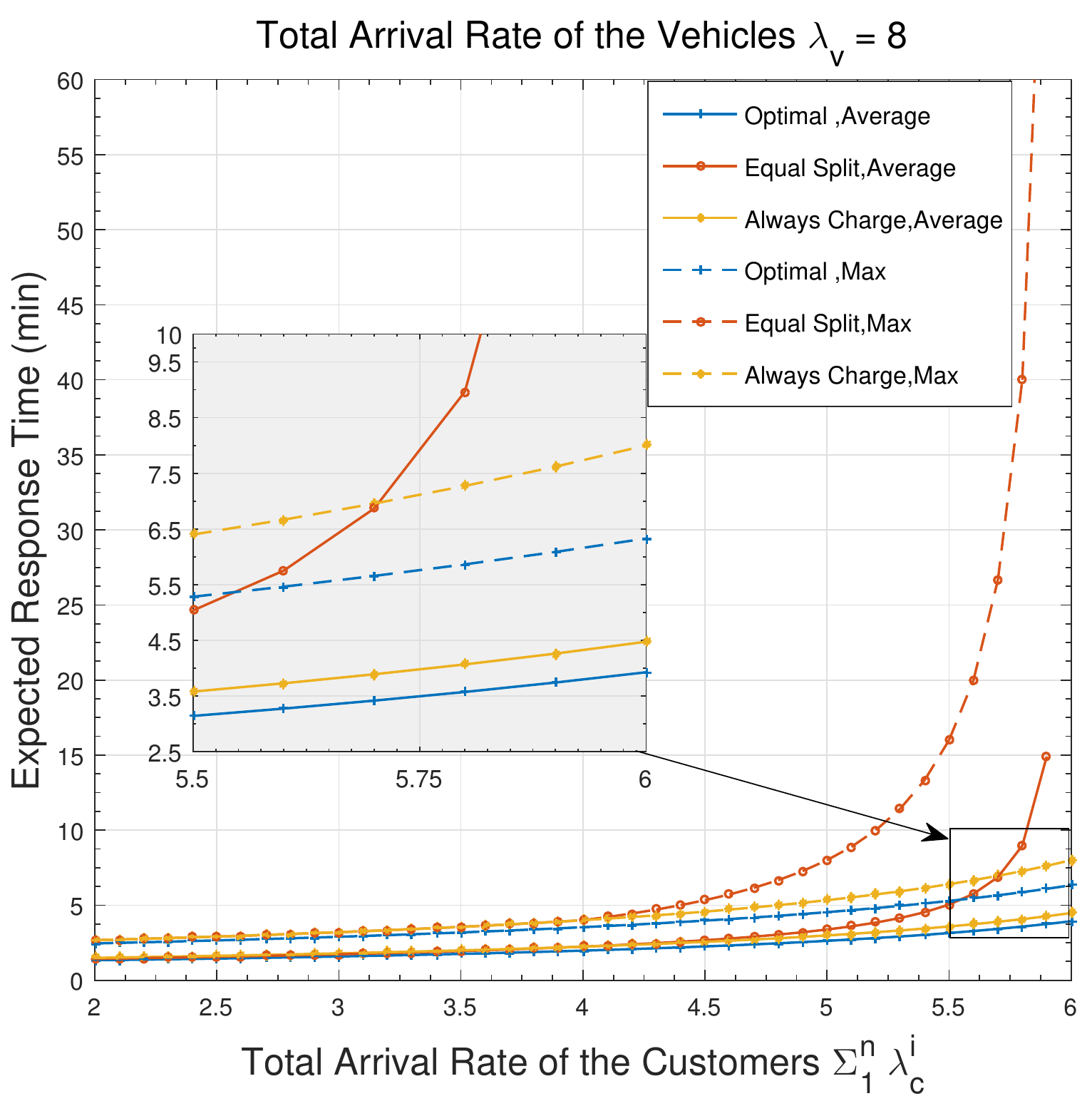}
    \caption{Comparison to non-optimized policies. \label{fig:comp-decision} }
\end{figure}
\section{Conclusion}
In this paper, we proposed solutions to the computational and charging bottlenecks threatening the success of AEMoD systems\ignore{ in attracting a large number of customers and solving private urban transportation problems}. The computational bottleneck can be resolved by employing a fog-based architecture to distribute the optimization loads over different service zones, reduce communication delays, and matches the nature of dispatching and charging processes of AEMoD vehicles. We also proposed a multi-class dispatching and charging scheme and developed its queuing model and stability conditions. We then formulated the problem of optimizing the proportions of vehicles of each class that will partially/fully charge or directly serve customers as a stochastic linear program, in order to minimize the maximum expected system response time while respecting the system stability constraints. The optimal decisions and corresponding maximum response time were analytically derived. The optimal number of classes both minimizing the response time and matching the vehicle and charging statistics was also characterized. Simulation results demonstrated both the merits of our proposed optimal decision scheme compared to typical non-optimized schemes, and its performance for different distributions of vehicle SoC and customer trip distances.
\ignore{
For the future work, we are planning to modify the system formulation and compare the final results.This will include using the optimization over the service load of the queues $\frac{\mu}{\lambda}$ or over the a whole vector $\dfrac{1}{\lambda_v^({i})-\lambda_c^({i})}$.This work will be extended to time-varying optimization and predictive control of the next time frame that will be applied to real life collected data.
The future optimization will include the number of cars depending on the demand and the area.
}

\appendices
\section{Proof of Lemma \ref{lem1}}\label{app:lem1}
From (\ref{eq:4}) and (\ref{eq:5}) we have : 
\begin{equation}\label{eq:18}
\begin{aligned}
& & & \lambda_c^{(i)} < \lambda_v(p_{i-1}\overline{q}_{i-1} + p_{i}q_{i}) , \; i = 1, \ldots, n-1.\\
& & & \lambda_c^{(n)} < \lambda_v(p_{n-1}\overline{q}_{n-1} + p_{0}q_{0}),\; i = n
\end{aligned}
\end{equation}
The summation of all the inequalities in (\ref{eq:18}) gives a new inequality 
\begin{equation}\label{eq:19}
\begin{aligned}
\sum ^{n}_{i=1}{\lambda_c^{(i)}} < \lambda_v [\sum ^{n-1}_{i=1}{(p_{i-1}\overline{q}_{i-1} + p_{i}q_{i})} + (p_{n-1}\overline{q}_{n-1} + p_{0}q_{0})]
\end{aligned}
\end{equation}
\begin{equation}\label{eq:20}
\begin{aligned}
\sum ^{n}_{i=1}{\lambda_c^{(i)}} < \lambda_v [{p_{0}\overline{q}_{0} + p_{1}q_{1}} + {p_{1}\overline{q}_{1}}+ ... + p_{n-1}\overline{q}_{n-1} + p_{0}q_{0}]
\end{aligned}
\end{equation}
We have ${\overline{q}_{i} + q_{i}}$ so ${p_{i}\overline{q}_{i} + p_{i}q_{i}} = p_{i}$
\begin{equation}\label{eq:21}
\begin{aligned}
\sum ^{n}_{i=1}{\lambda_c^{(i)}} < \lambda_v ({p_{0} + p_{1} + p_{2}}+ ... + p_{n-1})
\end{aligned}
\end{equation}
We have 
$\sum ^{n-1}_{i=0}{p_{i}} = 1  $  so  $\sum ^{n}_{i=1}{\lambda_c^{(i)}} < \lambda_v $

\section{Proof of Lemma \ref{lem2}}\label{app:lem2}
The summation of the inequalities given by (\ref{eq:6}) $\forall~i=\{0,\dots, n\}$ gives the following inequality : 
\begin{equation}\label{eq:22}
\begin{aligned}
\lambda_v\sum ^{n-1}_{i=0}p_{i} -\lambda_v\sum ^{n-1}_{i=0}p_{i}{q_{i}} +\lambda_v p_{0}q_{0} < C(n \mu_c) + \mu_c
\end{aligned}
\end{equation}
Since $\sum ^{n-1}_{i=0}p_{i}=1$ (because $p_n=0$ as described in Section 2), we get: 
\begin{equation}\label{eq:23}
\begin{aligned}
\lambda_v -\lambda_v\sum ^{n-1}_{i=1}p_{i}{q_{i}}\ignore{ +\lambda_v p_{0}q_{0}} < \mu_c (C n + 1)
\end{aligned}
\end{equation}
In the worst case, all the vehicles will be directed to partially charge before serving, which means that always  $q_i = 0$. Therefore, we get:
\begin{equation}\label{eq:24}
\begin{aligned}
C n  > \dfrac{\lambda_v}{\mu_c} - 1\;,
\end{aligned}
\end{equation}
which can be re-arranged to be: 
\begin{equation}
\begin{aligned}
n  > \dfrac{\lambda_v}{C \mu_c} - \dfrac{1}{C}
\end{aligned}
\end{equation}

\section{Proof of Theorem \ref{thm1}} \label{app:thm1}
Applying the KKT conditions to the inequalities constraints of (\ref{eq:11}), we get:
\begin{equation}\label{eq:13}
\begin{aligned}
& & &\alpha_{i}^*( {\lambda_v(p_{i-1}{q_{i-1}^*} -  p_{i}q_{i}^* ) + R^*} - \lambda_v p_{i-1} + \lambda_c^{(i)} ) = 0\\ 
& & & i= 1, \ldots, n-1.\\
& & &\alpha_{n}^* ( {\lambda_v( p_{n-1}q_{n-1}^* - p_{0}{q_{0}^*} ) + R^*} - \lambda_v p_{n-1} + \lambda_c^{(n)} ) = 0.\\
& & &\beta_{0}^* (\sum ^{n-1}_{i=0}\lambda_v(p_{i} - p_{i}{q_{i}^*}) - C (n \mu_c)) = 0.\\
& & &\beta_{1}^*(\lambda_v p_{0}q_{0}^* - \mu_c) = 0 \\
& & & \gamma_{i}^*(q_{i}^* - 1) = 0  , \;i = 0, \ldots, n-1.\\
& & &\omega_{i}^*q_{i}^* = 0 , \;i = 0, \ldots, n-1.\\
& & & \omega_{n}^*R^* = 0.
\end{aligned}
\end{equation}
Likewise, applying the KKT conditions to the Lagrangian function in (\ref{eq:12}), and knowing that the gradient of the Lagrangian function goes to $0$ at the optimal solution, we get the following set of equalities:
\begin{equation}\label{eq:14}
\begin{aligned}
& \lambda_v p_{i}(\alpha_{i+1}^* - \alpha_{i}^*) = \omega_{i}^* - \gamma_{i}^* , \;  i= 1, \ldots, n-1.\\
& \lambda_v p_{0}(\alpha_{1}^* - \alpha_{n}^*) = \omega_{0}^* - \gamma_{0}^*\\
& \sum_{i=1}^{n-1} \alpha_{i}^* = 1
\end{aligned}
\end{equation}
From Burke's theorem on the stability condition of the queues, the constraints on the charging queues are strict inequalities and the constraints on $R$ should also be strictly larger than 0. Combining the Burke's theorem and the equations on (\ref{eq:13}), we find that all the $\beta_{0}^* = \beta_{1}^* = 0$ and $\omega_{n}^* = 0$.
 
Knowing that the gradient of the Lagrangian goes to $0$ at the optimal solutions, we get the system of equalities given by (\ref{eq:14}). The fact that $\beta_{i}^* = 0$ and $\omega_{n}^* = 0$ explains the absence of  $\beta_{i}^*$ and $\omega_{n}^*$ in (\ref{eq:15}) The result given by multiplying the first equality in (\ref{eq:14}) by $q_i^*$ and the second equality by  $q_0^*$ combined with the last three equalities given by (\ref{eq:13}) gives :
\begin{equation}\label{eq:25}
\begin{aligned}
& & &\lambda_v p_{i}(\alpha_{i+1}^* - \alpha_{i}^*)q_{i}^* = - \gamma_{i}^* , \;  i= 1, \ldots, n-1.\\
& & &\lambda_v p_{0}(\alpha_{1}^* - \alpha_{n}^*)q_{0}^* = - \gamma_{0}^*\\
& & & \sum_{i=1}^{n-1} \alpha_{i}^* = 1
\end{aligned}
\end{equation}
(\ref{eq:25}) Inserted in the fifth equality in (\ref{eq:13}) gives : 
\begin{equation}\label{eq:26}
\begin{aligned}
& & &\lambda_v p_{i}(\alpha_{i+1}^* - \alpha_{i}^*)(q_{i}^* - 1)q_{i}^* = 0 , \;  i= 1, \ldots, n-1.\\
& & &\lambda_v p_{0}(\alpha_{1}^* - \alpha_{n}^*)(q_{0}^* - 1)q_{0}^* = 0\\
& & & \sum_{i=1}^{n-1} \alpha_{i}^* = 1 
\end{aligned}
\end{equation}
From (\ref{eq:26}) we have 
$0 < q_{0}^* <1$ only if $\alpha_{1}^* = \alpha_{n}^*$
And $0 < q_{i}^* <1$ only if $\alpha_{i+1}^* = \alpha_{i}^*$ 
Since $0 \leq q_{i}^* \leq 1$ then these equalities may not always be true 

if  $\alpha_{1}^* > \alpha_{n}^*$ and we know that $\gamma_{0}^* \geq 0$ then  $\gamma_{0}^* = 0$ which gives $q_{0}^* \ne 1 $ and $q_{0}^* = 0 $.

if  $\alpha_{i+1}^* > \alpha_{i}^*$ and we know that $\gamma_{i}^* \geq 0$ then  $\gamma_{i}^* = 0$ which gives $q_{i}^* \ne 1 $ and $q_{i}^* = 0 $

if  $\alpha_{1}^* < \alpha_{n}^*$ then  $\gamma_{0}^* > 0$ (it cannot be 0 because this will contradict with the value of $q_{i}$), which implies that $q_{0}^* = 1 $. 

if  $\alpha_{i+1}^* < \alpha_{i}^*$ then $\gamma_{i}^* > 0$ (it cannot be 0 because this contradicts with the value of $q_{i}$), which implies that $q_{i}^* = 1 $

Otherwise , if  $\alpha_{1}^* = \alpha_{n}^* \ne 0$ (they cannot be equal to 0 at the same time, which means that $q_{0} = 1$, and we know in advance that this cannot be the case here), we have $ q_{1}^* =\frac{p_{0}{q_{0}^*}}{p_{1}} - \frac{\lambda_v p_{0} - \lambda_c^{(1)} - R^*}{\lambda_v p_{1} } \; $ and $ q_{n-1}^* =\frac{p_{0}{q_{0}^*}}{p_{n-1}} - \frac{\lambda_v p_{0} - \lambda_c^{(n)} - R^*}{\lambda_v p_{n-1} } \; $

Finally, if  $\alpha_{i+1}^* = \alpha_{i}^* \ne 0$ (they cannot be equal to 0 at the same time, which means that $q_{i} = 1$, and we know in advance that this cannot be the case here), we have $ q_{i}^* =\frac{p_{i-1}{q_{i-1}^*}}{p_{i}} - \frac{\lambda_v p_{i-1} - \lambda_c^{(i)} - R^*}{\lambda_v p_{i} } $ and $ q_{i+1}^* =\frac{p_{i}{q_{i}^*}}{p_{i+1}} - \frac{\lambda_v p_{i} - \lambda_c^{(i+1)} - R^*}{\lambda_v p_{i+1} }  $

\section{Proof of Theorem \ref{thm2}} \label{app:thm2}
To prove this theorem, we first start by putting the problem on the standard linear programming form as follows:
\begin{equation}\label{eq:27}
\begin{aligned}
& \underset{q_0,q_1,\ldots ,q_{n-1}}{\text{minimize }} -R \\
& \text{subject to} \\
&\text{Constraint on costumers arrivals queues}\\
& {\lambda_v(p_{i-1}{q_{i-1}} -  p_{i}q_{i} ) + R} \leq \lambda_v p_{i-1} - \lambda_c^{(i)} , \; i = 1, \ldots, n-1.\\
& {\lambda_v( p_{n-1}q_{n-1} - p_{0}{q_{0}} ) + R} \leq \lambda_v p_{n-1} - \lambda_c^{(n)} .\\
& \text{Constraint on charging vehicles queues}\\
&-\lambda_v\sum ^{n-1}_{i=0} p_{i}{q_{i}^*} < C (n \mu_c) - \lambda_v\\
&\lambda_v p_{0}q_{0} < \mu_c \\
&\text{Constraint on probabilities and decisions}\\
&q_{i} \leq 1  , \; i = 0, \ldots, n-1.\\
&-q_{i} \leq 0 , \; i = 0, \ldots, n-1.\\
& -R < 0.\\
&\sum ^{n-1}_{i=0}p_i = 1 , \; 0 < p_{i} < 1  , \; i = 0, \ldots, n-1. 
\end{aligned}
\end{equation}
Writing the problem on its matrix form, we get: 
\begin{equation}\label{eq:39}
\begin{aligned}
& \underset{X}{\text{minimize}}
& &  \mathbf{c}^T \mathbf{x} \\
& \text{subject to} & & \mathbf{Ax} \preceq \mathbf{b}\\
\end{aligned}
\end{equation}
where:
\begin{equation}
\begin{aligned}
&\mathbf{x}_{(n+1\times 1)}=\begin{pmatrix}
    q_0 \\ q_1 \\ \vdots \\ q_{n-1} \\ R
\end{pmatrix}
&\mathbf{c}_{(n+1\times 1)}=\begin{pmatrix}
    0 \\ 0 \\ 0 \\ \vdots \\ -1
\end{pmatrix}\\
&\mathbf{b}_{(3n+4\times 1)}=\begin{pmatrix}
   \lambda_v p_0 - \lambda_c^{(1)} \\ \vdots \\ \lambda_v p_{n-1} - \lambda_c^{(n)} \\C (n \mu_c) - \lambda_v \\ \mu_c\\ 1 \\ \vdots \\  1 \\ \infty \\ 0 \\ \vdots \\  0 
\end{pmatrix}\\
\end{aligned}
\end{equation}

\begin{equation}
\begin{aligned}
&\mathbf{A}_{(3n+4\times n+1)}= \\
&\begin{small}
\begin{pmatrix}
   \lambda_v p_0 & -\lambda_v p_1 & 0 & \dots & 0 & 1 \\
	0 & \lambda_v p_1 & -\lambda_v p_2  & \dots & 0 & 1 \\ 
\vdots & \ddots & \ddots & \ddots & \ddots & \vdots \\
    0 & \dots & 0 & \lambda_v p_{n-2} & -\lambda_v p_{n-1} & 1 \\
	-\lambda_v p_0 &  0 & \dots & \dots & \lambda_v p_{n-1} & 1 \\
  - \lambda_v p_0 &  - \lambda_v p_1 & \dots & \dots &  - \lambda_v p_{n-1} & 0  \\
    \lambda_v p_0 &0 & \dots & \dots & \dots & 0  \\
& &{\huge\mbox{{$I_{n+1}$}}} &   &   \\
& & &  &  &  &   \\
& &{\huge\mbox{{$-I_{n+1}$}}} &   &   \\
\end{pmatrix}
\end{small}
\end{aligned}
\end{equation}

The matrix form of the Lagrangian function can be thus expressed as:
 Lagrangian :\\
\begin{equation}
\begin{aligned}
L(\mathbf{x},\boldsymbol{\nu}) = \mathbf{c}^T \mathbf{x} + \boldsymbol{\nu}^T (\mathbf{Ax}-\mathbf{b}) = -\mathbf{b}^T+(\mathbf{A}^T\boldsymbol{\nu} + \mathbf{c})^T \mathbf{x}\;,\\
\end{aligned}
\end{equation}
where $\boldsymbol{\nu}$ is the vector of the dual variables or Lagrange multipliers vector associated with the problem \ref{eq:39}. Each element $\nu_{i}$ of $\boldsymbol{\nu}$ is the Lagrange multiplier associated with the $i$-th inequality constraint $\mathbf{a}_{i}\mathbf{x} - b_{i} \leq 0$, where $\mathbf{a}_i$ and $b_i$ are the the $i$-th row and and $i$-th element of matrix $\mathbf{A}$ and vector $\mathbf{b}$, respectively. In fact $\boldsymbol{\nu}$ is the vector that includes all the vectors $\boldsymbol{\alpha}$, $\boldsymbol{\beta}$, $\boldsymbol{\gamma}$, $\boldsymbol{\omega}$ as follows: 
\begin{equation}
\boldsymbol{\nu}^T_{(1\times 3n+4)}= \begin{pmatrix} \alpha_1  \; \dots \; \alpha_n \; \beta_0 \; \beta_1 \;  \gamma_0 \; \dots \; \gamma_n \; \omega_0 \; \dots \; \omega_{n}\end{pmatrix}
\end{equation}
We will used this combined notation for ease and clarity of notation. 

The Lagrange dual function is expressed as:
\begin{equation}
\begin{aligned}
& g(\boldsymbol{\nu}) = \underset{\mathbf{x}}\inf~ L(\mathbf{x},\boldsymbol{\nu}) = -\mathbf{b}^T \boldsymbol{\nu} + \underset{\mathbf{x}}\inf~(\mathbf{A}^T\boldsymbol{\nu} +\mathbf{c})^T\mathbf{x}\;,
\end{aligned}
\end{equation}
The solution for this function is easily determined analytically, since a linear function is bounded below only when it is identically zero. Thus, $ g(\boldsymbol{\nu}) = -\infty$  except when $\mathbf{A}^T\boldsymbol{\nu} + \mathbf{c} = \mathbf{0}$, where $\mathbf{0}$ is the all zero vector. Consequently, we have:
\begin{equation}
\begin{aligned}
& g(\boldsymbol{\nu}) =  \begin{cases}
            -\mathbf{b}^T \boldsymbol{\nu} & \mathbf{A}^T\boldsymbol{\nu} + \mathbf{c} = \mathbf{0} \\
            -\infty & \text{ortherwise}\\
            \end{cases} \\
\end{aligned}
\end{equation}
For each $\nu\succeq \mathbf{0}$ (i.e., $\nu_{i} \geq 0~\forall~i$), the Lagrange dual function gives us a lower bound on the optimal value of the original optimization problem. This leads to a new equivalent optimization problem, which is the dual problem:
\begin{equation}
\begin{aligned}
& \underset{\boldsymbol{\nu}}{\text{maximize}}
& &  g(\boldsymbol{\nu}) = -\mathbf{b}^T\boldsymbol{\nu} \\
& \text{subject to} & & \mathbf{A}^T\boldsymbol{\nu} + \mathbf{c} = \mathbf{0} \\
& & & \boldsymbol{\nu} \succeq \mathbf{0} 
\end{aligned}
\end{equation}
Applying Slater's Theorem for duality qualification, and since strong duality holds for the considered optimization problem, then solving the dual problem gives the exact optimal solution for the primal problem. This is described by the equality :
\begin{equation}
\begin{aligned}
& & g(\boldsymbol{\nu}^*) = -\mathbf{b}^T \boldsymbol{\nu}^* = \mathbf{c}^T \mathbf{x}^* = -R^*
\end{aligned}
\end{equation}
By expanding on the values of $\mathbf{b}$ and $\boldsymbol{\nu}$ in the above equation, the optimal value of $R^*$ can be expressed as:
 \begin{equation}
R^* = \sum_{i=1}^{n} (\lambda_v p_{i-1} - \lambda_c^{(i)}) \alpha_{i}^* + \sum_{i=0}^{n-1} \gamma_i^*
\end{equation}

\end{document}